\tikzset{%
  fancy quotes/.style={
    text width=\fq@width pt,
    align=justify,
    inner sep=1em,
    anchor=north west,
    minimum width=\linewidth,
  },
  fancy quotes width/.initial={.5\linewidth},
  fancy quotes marks/.style={
    scale=5,
    text=white,
    inner sep=0pt,
  },
  fancy quotes opening/.style={
    fancy quotes marks,
  },
  fancy quotes closing/.style={
    fancy quotes marks,
  },
  fancy quotes background/.style={
    show background rectangle,
    inner frame xsep=0pt,
    background rectangle/.style={
      fill=gray!8,
      rounded corners,
    },
  }
}
\newenvironment{fancyquotes}[1][]{%
\noindent
\tikzpicture[fancy quotes background]
\node[fancy quotes opening,anchor=north west] (fq@ul) at (0,0) {``};
\tikz@scan@one@point\pgfutil@firstofone(fq@ul.east)
\pgfmathsetmacro{\fq@width}{\linewidth - 2*\pgf@x}
\node[fancy quotes,#1] (fq@txt) at (fq@ul.north west) \bgroup}
{\egroup;
\node[overlay,fancy quotes closing,anchor=east] at (fq@txt.south east) {''};
\endtikzpicture}
\title{TSP Escapes the $O(2^n n^2)$ Curse} 
\titlerunning{TSP Escapes the $O(2^n n^2)$ Curse} 
\author{Mihail Stoian}{University of Technology Nuremberg, Germany\and \url{https://stoianmihail.github.io}}{mihail.stoian@utn.de}{https://orcid.org/0000-0002-8843-3374}{}
\authorrunning{M. Stoian} 
\keywords{traveling salesman problem, polynomial method, dynamic programming} 
\newcommand{\sparagraph}[1]{\vspace{1mm}\noindent {\bf #1}}
\newcommand{\dprec}{\mathrm{dp}}
\newcommand{\ovee}{\mathbin{\mathpalette\make@circled\vee}}
\newcommand{\make@circled}[2]{%
  \ooalign{$\m@th#1\smallbigcirc{#1}$\cr\hidewidth$\m@th#1#2$\hidewidth\cr}%
}
\newcommand{\smallbigcirc}[1]{%
  \vcenter{\hbox{\scalebox{0.82}{$\m@th#1\bigcirc$}}}%
}
\newcommand{\subalign}[1]{%
  \vcenter{%
    \Let@ \restore@math@cr \default@tag
    \baselineskip\fontdimen10 \scriptfont\tw@
    \advance\baselineskip\fontdimen12 \scriptfont\tw@
    \lineskip\thr@@\fontdimen8 \scriptfont\thr@@
    \lineskiplimit\lineskip
    \ialign{\hfil$\m@th\scriptstyle##$&$\m@th\scriptstyle{}##$\hfil\crcr
      #1\crcr
    }%
  }%
}
\begin{document}

\maketitle

\begin{abstract}
The dynamic programming solution to the traveling salesman problem due to Bellman, and independently Held and Karp, runs in time $O(2^n n^2)$, with no improvement in the last sixty years. We break this barrier for the first time by designing an algorithm that runs in deterministic time $2^n n^2 / 2^{\Omega(\sqrt{\log n})}$. We achieve this by strategically remodeling the dynamic programming recursion as a min-plus matrix product, for which faster-than-na\"ive algorithms exist.
\end{abstract}

\section{Introduction}\label{sec:introduction}

The Traveling Salesman Problem (TSP) needs no introduction: It is probably one of the most studied problems in computer science since its establishment itself as a field (and not only; we recommend Schrijver's~\cite{schrijver_history} survey). Due to its simplicity, namely finding the shortest tour through $n$ cities, it is also one of the introductory problems in dynamic programming. Bellman~\cite{bellman_dp} and, in an independent work Held and Karp~\cite{held_karp_dp}, proposed a dynamic programming solution that runs in $O(2^n n^2)$-time and takes $O(2^n n)$-space, using a table indexed by vertex subsets. This has been \emph{the} textbook solution for decades.

The unweighted case, namely deciding whether the graph has a Hamiltonian cycle, can indeed be solved faster: In his breakthrough, Bj\"orklund~\cite{hamilton_determinant} showed a Monte Carlo algorithm that runs in $O^*(1.657^n)$-time. Using a standard technique, it can be adapted to solve our particular problem, yet with an $O(w)$ dependence in the running time, where $w$ is the sum of all edge weights. In this context, Nederlof~\cite{nederlof_tsp_bipartite} provided an $O(1.9999^n)$-time randomized algorithm with constant error probability for the \emph{bipartite} TSP, \emph{assuming the matrix-multiplication exponent, $\omega$, is equal to 2}. He also outlines how his methods can be useful for solving TSP in non-bipartite graphs. Degree-boundedness is also a fertile ground for such improvements~\cite{tsp_bounded_1, tsp_bounded_2}. In the Euclidean setting, the $2^{O(1 - 1/d)}$ lower-bound~\cite{euclid_tsp_lb} has already been attained~\cite{euclid_tsp_algo}. However, despite these advances, the standard dynamic programming solution in time $O(2^n n^2)$ is still the best algorithm for the general setting. This was also the status quo when Cook~\cite{cook_tsp} wrote his compendium on TSP ten years ago:
\newline

\begin{fancyquotes}
\centering
Unfortunately for TSP fans, no good algorithm is known for the problem. The best result thus far is a solution method, discovered in 1962, that runs in time proportional to $n^2 2^n$.\\
--- William Cook~\cite{cook_tsp}
\end{fancyquotes}

This leads us to our driving research question:
\begin{quote}
    \itshape
    \centering
    Is there an $o(2^n n^2)$-time algorithm for TSP?
\end{quote}
A positive answer would break another \emph{psychological} barrier, as Koutis and Williams~\cite{koutis_williams} refer to it, and, if simple enough, can be used as the new textbook algorithm for TSP.

\section{Our Results}

We answer our research question in the positive. We introduce an algorithm that, to our best knowledge, is the first to break the longstanding $O(2^n n^2)$ time-barrier for TSP. We remodel the standard dynamic programming recursion as a min-plus matrix product, namely:
\begin{restatable}[Backbone]{theorem}{MainResult}
\label{thm:weakly_approx_min_sum_subset_convolution}
    If the Min-Plus Matrix Product of $n \times n$ matrices can be solved in time $T(n)$, then TSP can be solved in time $O(2^n T(n) / n)$.
\end{restatable}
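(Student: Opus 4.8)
The plan is to work directly with the Held--Karp table and to show that one entire layer of the recursion collapses to a single min-plus matrix product. For $S\subseteq\{1,\dots,n\}$ with $1\in S$ and $v\in S$, let $f_S(v)$ be the minimum cost of a path that starts at $1$, visits exactly $S$, and ends at $v$; recall $f_{\{1\}}(1)=0$, the recurrence $f_S(v)=\min_{u\in S\setminus\{v\}}f_{S\setminus\{v\}}(u)+w(u,v)$ (with the standard convention $f_R(1)=+\infty$ for $|R|\ge 2$), and that the optimum tour has length $\min_v f_{\{1,\dots,n\}}(v)+w(v,1)$. I would fill the table in increasing order of $k=|S|$, producing all states with $|S|=k$ in one shot from those with $|S|=k-1$.

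The key move is to fold the guard ``$u\in S\setminus\{v\}$'' into the matrix by $\infty$-padding. Fix $k$, let $\widehat F$ be the matrix whose rows are the subsets $R$ of size $k-1$ with $1\in R$, whose columns are the vertices $u$, and whose entries are $\widehat F[R,u]=f_R(u)$ if $u\in R$ and $+\infty$ otherwise (a finite sentinel exceeding the total edge weight works just as well, if an integer instance is wanted), and let $W[u,v]=w(u,v)$ be the $n\times n$ weight matrix. Writing $\star$ for the min-plus product, for every row $R$ and every $v\notin R$,
\[
(\widehat F\star W)[R,v]=\min_{u}\widehat F[R,u]+W[u,v]=\min_{u\in R}f_R(u)+w(u,v)=f_{R\cup\{v\}}(v),
\]
the padding killing precisely the terms with $u\notin R$. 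Since $S\mapsto(S\setminus\{v\},v)$ is a bijection from size-$k$ states onto (row, admissible column) pairs, this one product produces every level-$k$ entry exactly once. As $\widehat F$ has at most $\binom{n}{k-1}$ rows, I would split them into $\ceil{\binom{n}{k-1}/n}$ blocks of $\le n$ rows and pad each block to $n\times n$, so the layer reduces to $\ceil{\binom{n}{k-1}/n}$ genuine $n\times n$ min-plus products, at cost $\ceil{\binom{n}{k-1}/n}\cdot T(n)$, plus $O(\binom{n}{k-1}n)$ bookkeeping to assemble $\widehat F$ and scatter its results; padding never lowers an entry, so the square product is valid and costs at most $T(n)$.

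Summing over $k=2,\dots,n$, the total is $\sum_k\big(\binom{n}{k-1}/n+1\big)T(n)+O(2^n n)=O\!\big(2^n T(n)/n\big)$, where $\sum_k\binom{n}{k-1}<2^n$ carries the bound and the stray $O(nT(n))$ and $O(2^n n)$ terms are absorbed using $T(n)=\Omega(n^2)$ (which holds for any correct algorithm, the output alone having $n^2$ entries). I expect no genuine obstacle beyond the first move itself: recognizing that the set-membership guard in the Held--Karp recurrence is exactly what an $\infty$-entry of a min-plus product enforces, so that a whole DP layer becomes one product. The remaining points---reducing a rectangular product to square ones by row-blocking, the binomial summation, and checking that $R=S\setminus\{v\}$ together with the column $v$ recovers each state precisely once---are routine.
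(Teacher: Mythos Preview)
Your proposal is correct and follows essentially the same approach as the paper: both process the Held--Karp table layer by layer, observe that the recurrence at layer $k$ is a min-plus product of the (appropriately $\infty$-padded) layer-$(k-1)$ matrix with the cost matrix, chunk the rows into blocks of size $n$ to obtain $\lceil\binom{n}{k-1}/n\rceil$ square products, and sum $\sum_k \binom{n}{k-1}T(n)/n = O(2^n T(n)/n)$. Your explicit mention of the bijection $(S,v)\mapsto(S\setminus\{v\},v)$ and the absorption of the $O(2^n n)$ bookkeeping via $T(n)=\Omega(n^2)$ matches the paper's accounting of $T(n)+n^2$ per batch.
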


If we plug in the fastest algorithm for the min-plus matrix product due to Williams~\cite{williams_apsp}, which was been later derandomized by Chan and Williams~\cite{chan_williams}, we obtain our claimed result:

\begin{corollary}[Main]
    TSP can be solved in deterministic time $2^n n^2 / 2^{\Omega(\sqrt{\log n})}$.
\end{corollary}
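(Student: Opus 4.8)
The plan is to prove the Backbone theorem, \Cref{thm:weakly_approx_min_sum_subset_convolution}; the corollary is then immediate, since substituting the deterministic bound $T(n) = n^3 / 2^{\Omega(\sqrt{\log n})}$ for the min-plus product of two $n \times n$ matrices, due to Williams~\cite{williams_apsp} and its derandomization by Chan and Williams~\cite{chan_williams}, into the running time $O(2^n T(n)/n)$ yields exactly $2^n n^2 / 2^{\Omega(\sqrt{\log n})}$.

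So I focus on \Cref{thm:weakly_approx_min_sum_subset_convolution}, starting from the Held--Karp table on a graph with weight matrix $W$. For $1 \in S \subseteq V$ and $v \in S$, let $f(S,v)$ be the minimum weight of a simple path from $1$ to $v$ visiting exactly the vertices of $S$, with base case $f(\{1\},1) = 0$, recursion
\[
f(S,v) \;=\; \min_{u \in S \setminus \{v\}}\; f(S \setminus \{v\},\, u) + W[u][v] ,
\]
and final answer $\min_{v \ne 1} f(V,v) + W[v][1]$; evaluated directly this is the classical $\Theta(2^n n^2)$ algorithm. The crucial step is to transpose the recursion. Put $T := S \setminus \{v\}$, so that $1 \in T$ and $v \notin T$, and adopt the convention $f(T,u) = \infty$ whenever $u \notin T$; then the recursion reads
\[
f(T \cup \{v\},\, v) \;=\; \min_{u \in V}\; f(T,u) + W[u][v] .
\]
The point is that \emph{for a fixed $T$ the same matrix $W$ is used for every endpoint $v$}, so the row vector $(f(T,u))_{u \in V}$ is min-plus multiplied by $W$. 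Hence $n$ sets $T_1,\dots,T_n$ can be handled at once: stack the rows $(f(T_i,u))_{u \in V}$ into an $n \times n$ matrix $A$, compute the min-plus product $C = A \star W$ in time $T(n)$, and for every $i$ and every $v \notin T_i$ read off $f(T_i \cup \{v\},v) = C[i][v]$ (entries with $v \in T_i$ are discarded).

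It remains to orchestrate these products. Process the $2^{n-1}$ sets $T \ni 1$ in nondecreasing order of size: a set of size $k$ writes only into cells indexed by supersets of size $k+1$, so all dependencies are met, and the cell $(S,v)$ is written exactly once---by the unique set $T = S \setminus \{v\}$---so no conflicts or extra minimizations arise. Splitting, for each size $k$, the $\binom{n-1}{k-1}$ relevant sets into blocks of $n$ uses $\sum_{k=1}^{n} \lceil \binom{n-1}{k-1}/n \rceil \le 2^{n-1}/n + n = O(2^n/n)$ min-plus products; gathering the $n$ rows of each $A$ and scattering the $n^2$ output entries (each destination superset obtained from $T$ by a single bitmask operation) costs $O(n^2)$ per block, and since $T(n) = \Omega(n^2)$ this overhead is absorbed. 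The total running time is therefore $O(2^n T(n)/n)$, as claimed.

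The only genuine obstacle is discovering this transposition. Grouping the table directly by $S$ does not work, because the inner minimization defining $f(S,v)$ ranges over the $v$-dependent set $S \setminus \{v\}$; different endpoints $v$ of the same $S$ would then correspond to different inner index sets and could not share a matrix factor. Re-indexing by $T = S \setminus \{v\}$ is exactly what makes all $n$ endpoints align as the columns of a single min-plus product against the fixed matrix $W$. After that, everything---the size ordering, the uniqueness of writes, and checking that the $O(n^2)$-per-block overhead is swallowed by $T(n)$---is routine bookkeeping.
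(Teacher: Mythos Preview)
Your proposal is correct and follows essentially the same approach as the paper: the re-indexing by $T = S \setminus \{v\}$ is exactly what the paper describes as switching from a \emph{pull} to a \emph{push} computation, and the batching of $n$ such sets into a single min-plus product with the cost matrix, processed layer by layer, matches the paper's algorithm and analysis. Your observation that each cell $(S,v)$ is written exactly once (so no auxiliary minimization is needed) and your restriction to sets $T \ni 1$ are minor refinements, but the core idea and the running-time accounting are the same.
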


The factor $\frac{1}{2^{\Omega(\sqrt{\log n})}}$ may seem a bit unusual to the untrained reader. In particular, this factor is better than $\frac{1}{\log^k n}$ time, for every $k$. We next outline our algorithm.

\section{Algorithm}

Let us first recall the standard dynamic programming (DP) solution:
\begin{align}
    \dprec(S, k) &= \displaystyle\min_{j \in S \setminus \{k\}}\left(\dprec(S \setminus \{k\}, j) + c_{j,k}\right), k \in S\label{eq:rec}\\
    \dprec(\{1\}, 1) &= 0\label{eq:base_case},
\end{align}
where $S \subseteq [n] \vcentcolon= \{1, \ldots, n\}$ and $c$ is the $n\times n$ cost matrix of the instance. At the end, the final solution is computed as $\displaystyle\min_{j}\left(\dprec([n], j) + c_{j,1}\right)$. The algorithm runs in time $O(2^n n^2)$, since for each state $(S, k)$ we have to loop over all $j$'s in $S \setminus \{k\}$. A convenient way to compute the $\dprec$-table is to fix a cardinality $\ell \in \{2, \ldots, n\}$ and compute the $\dprec$-values for all sets of cardinality $\ell$. In other words, the $\dprec$-table is computed \emph{layer-wise}.

\sparagraph{Key Insight.}~The next observation is rather crucial for understanding our algorithm: To some extent, Eq.~\eqref{eq:rec} \emph{is indeed an inner product in the min-plus semi-ring} between the vector $\dprec(S \setminus \{k\}, \vcentcolon)$ and the $j$th column of the cost matrix $c$ (for a fixed $k \in S$). Thus, the computation of the entire row $\dprec(S, \vcentcolon)$ consists of several min-plus inner products. Next, we exploit this new perspective.

\sparagraph{Remodeling the DP.}~To this end, we remark that if we represented these inner products as a matrix product, the output could also be shared with other rows of the current layer. With this in mind, we can \emph{group} the rows of the $\dprec$-table of the previous layer, i.e., the $(\ell - 1)$th layer, into batches of size $n$ (so that the dimension matches).\footnote{The rows of the final batch may be padded with $\infty$ in case the batch does not have exactly size $n$.} We perform the min-plus product of the current batch with the cost matrix and then update the rows for the current cardinality $\ell$. After all $\lceil{n \choose \ell - 1} / n\rceil$ batches have been iterated, the $\ell$th layer of the $\dprec$-table has been fully computed.

\sparagraph{Pseudocode.}~We outline the pseudocode in Alg.~\ref{alg:new_dp}. We first initialize the table with $\infty$ and set up the base case, Eq.~\eqref{eq:base_case}. We then iterate over all cardinalities $\ell \in \{2, \ldots, n\}$ and generate the corresponding batches. Specifically, the function $\textsc{Batches}(\ell - 1)$ generates internally the set $\{S \subseteq [n] \mid |S| = \ell - 1\}$ containing all sets of cardinality $\ell - 1$ and then chunks it into batches of size $n$ (the final batch is handled individually in the min-plus~product), returning these one by one in the variable $\mathcal{B}$. For each batch, we perform the min-plus product between the corresponding rows of the $\dprec$-table, i.e., $\dprec(\mathcal{B}, \vcentcolon)$, and the cost matrix $c$. This results in a temporary matrix $p$ which is then iterated to fill in the values of the current layer. Namely, for each set $\mathcal{B}[i]$, we take the $k$'s that are not present within it and update the value of the set $\mathcal{B}[i] \cup \{k\}$ with $p_{i,k}$. In particular, $p_{i,k}$ stores the min-plus inner product between $\dprec(\mathcal{B}[i], \vcentcolon)$ and $c_{\vcentcolon, k}$. Hence, the operation in Eq.~\eqref{eq:rec}, which once took linear time, can now be done in $O(1)$, given the min-plus product $p$ was computed before. Finally, we return the solution.
\begin{algorithm}
    \caption{$\textsc{TSPviaMinPlusProduct}(n, c)$}
	\label{alg:new_dp}
\begin{algorithmic}[1]
    \State $\dprec(\vcentcolon, \vcentcolon) \gets \infty$
    \State $\dprec(\{1\}, 1) \gets 0$
    \For {$\ell = 2, \ldots, n$}
        \For {$\mathcal{B} \in \textsc{Batches}(\ell - 1)$}
            \State $p \gets \textsc{MinPlusProduct}(\dprec(\mathcal{B}, \vcentcolon), c)$
            \For {$i = 1, ..., |\mathcal{B}|$}
                \State $\dprec(\mathcal{B}[i] \cup \{k\}, k) \gets \min(\dprec(\mathcal{B}[i] \cup \{k\}, k), p_{i,k})$, $\forall k \in [n] \setminus \mathcal{B}[i]$
            \EndFor
        \EndFor
    \EndFor
    \State \Return $\displaystyle\min_{k}\left(\dprec([n], k) + c_{k,1}\right)$
\end{algorithmic}
\end{algorithm}

We show that the algorithm returns the optimal solution in the claimed running time:

\MainResult*
\begin{proof}
    The main observation is that Alg.~\ref{alg:new_dp} slightly changes the computation flow: While Eq.~\eqref{eq:rec} \emph{pulls} the values from the previous layer to compute the current layer, our algorithm \emph{pushes} them instead. Indeed, once the min-plus product of the current batch has been computed as matrix $p$, we can gradually update the values of the current layer $\ell$. Since all batches of the previous layer are iterated, the correctness of the previous algorithm is preserved.

    Let us now analyze the running time. For each layer $\ell$, there are $\lceil {n \choose \ell-1} \rceil$ batches to iterate. In each batch, we perform a min-plus product between the $\dprec$-table \emph{restricted} to the batch $\mathcal{B}$ (this has $n \times n$ size) and the $n \times n$ cost matrix in time $T(n)$. We can iterate the $n \times n$ intermediate matrix $p$ to update the values of the current layer. Thus, the total running time reads\[
       O\left(\sum_{\ell=2}^{n} \frac{\binom{n}{{\ell - 1}}}{n} \cdot \left(T(n) + n^2\right)\right) = O\left(\frac{2^n}{n} T(n)\right).
    \]
\end{proof}

Let us now take a closer look at $T(n)$, the running time of the min-plus matrix product. Fortunately, there has been a rich line of research on computing it, culminating in Williams' algorithm~\cite{williams_apsp}, which runs in time $n^3 / 2^{\Omega(\sqrt{\log n})}$ and was later derandomized by Chan and Williams~\cite{chan_williams}. Thus, our Alg.~\ref{alg:new_dp} can run in time $2^n n^2 / 2^{\Omega(\sqrt{\log n})}$.

\section{Related Work}

\sparagraph{Traveling Salesman Problem.}~The research on TSP is rather vast and it would take an entire survey to cover it. Apart from those mentioned in the introduction, approximation algorithms for TSP are also a beloved research area: Christofides' 1.5-approximation algorithm~\cite{christofides} (independently discovered by Serdyukov~\cite{serdyukov}) is probably one of the classics of approximation algorithms. In a breakthrough result, Karlin, Klein, and Gharan~\cite{karlin_tsp} showed that for some $\varepsilon > 10^{-36}$, there is an $(1.5-\varepsilon)$-approximation algorithm. Remarkable results have also been obtained for asymmetric TSP~\cite{asadpour2017log, ola_tsp} and in other settings as well~\cite{traub2019approaching, zenklusen20191}. Beyond that, there is also a PTAS for the Euclidean setting~\cite{arora_tsp, mitchell_tsp}. Another favorite research area is the $k$-OPT heuristic, which is a local search algorithm that is allowed to replace $k$ edges of a given solution to obtain a better cost~\cite{k_opt_1, k_opt_2, k_opt_3, k_opt_4}. A further prominent area is that of polynomial-space algorithms, under which setting TSP has an $O(4^n n^{\log n})$-time algorithm~\cite{poly_space_tsp_1, poly_space_tsp_2}. Alternatively, one can trade off time and space, as shown by Koivisto and Parviainen~\cite{tsp_tradeoff}. TSP also enjoys a quantum speedup, as shown by Ambainis et al.~\cite{quantum_speedup}.

\sparagraph{Min-Plus Matrix Product.}~While our improved algorithm for TSP directly uses the fastest deterministic algorithm for the min-plus product, which uses the polynomial method from circuit complexity~\cite{williams_apsp, chan_williams}, the line of research on this problem has been vast, with incremental improvements over the years, mainly for its main application, the all-pairs shortest paths (APSP) problem~\cite{apsp_1, apsp_2, apsp_3, apsp_4, apsp_5, apsp_6, apsp_7, apsp_8, apsp_9, apsp_10, apsp_11, chan_geo_apsp, apsp_13}. Notably, there is an open problem in fine-grained complexity whether there exists an algorithm for the min-plus product which runs in truly subcubic time, i.e., $O(n^{3-\varepsilon})$ for $\varepsilon > 0$~\cite{apsp_conjecture}. To this end, recent work has focused on understanding for which instances this product can be solved in truly subcubic time. Such instances are bounded-difference matrices~\cite{bounded_matrix_apsp_1, bounded_matrix_apsp_2}, later generalized to the case where one of the matrices is of $O(1)$-approximate rank~\cite{generalized_xu}, matrices with monotone rows or columns~\cite{monotone_apsp_1, monotone_apsp_3, monotone_apsp_2}, or geometrically weighted matrices~\cite{chan_geo_apsp}.


\section{Discussion}

While modest, our improvement shows that the well-known dynamic programming solution due to Bellman~\cite{bellman_dp} and, independently Held and Karp~\cite{held_karp_dp}, designed more than sixty years ago and taught in every algorithms course, is \emph{not} the best we can hope for. This breaks another \emph{psychological} barrier, as Koutis and Williams~\cite{koutis_williams} refer to it.

\sparagraph{Open Problems.}~A natural question is whether our algorithm can be sped up. Given the APSP hypothesis~\cite{apsp_conjecture}, it seems that we may indeed need to resort to special instances. For example, note that the cost matrix of the TSP instance remains constant throughout the algorithm. This opens up the possibility of preprocessing it, even in $O(2^n n)$-time, so that upcoming matrix products can be performed faster.

In this light, it is natural to ask whether an $O(2^n n^{2-\varepsilon})$-time algorithm exists for $\varepsilon > 0$. Even more interesting is the (still) open question of whether there is an $O(1.9999^n)$-time algorithm, for which the work of Nederlof~\cite{nederlof_tsp_bipartite} has already paved the way.

\sparagraph{Accessibility.}~For the ``TSP fans'' of Cook~\cite{cook_tsp}, we have prepared a proof-of-concept implementation of Alg.~\ref{alg:new_dp} on the author's Github page.\footnote{\url{github.com/stoianmihail}}


\bibliography{tsp}

\end{document}